\newtheorem{thm}{Theorem}[section]
\newcommand{\R}{{\rm I}\kern-0.18em{\rm R}}
\newcommand{\1}{{\rm 1}\kern-0.25em{\rm I}}
\newcommand{\E}{{\rm I}\kern-0.18em{\rm E}}
\newcommand{\p}{{\rm I}\kern-0.18em{\rm P}}
\title{Heavy-tailed probability distributions in social sciences}
\author{Lev B. Klebanov\thanks{Department of Probability and Mathematical Statistics, Charles University, 
Prague, Czech Republic. e-mail: levbkl@gmail.com}, Yulia V. Kuvaeva\thanks{ Department of Finance, Money Circulation and Credit, Ural State University of Economics,
		620144 Yekaterinburg, Russia; ykuvaeva1974@mail.ru}}
\date{}
\begin{document}
\maketitle

\begin{abstract}
We present an overview of possible reasons for the appearance of heavy-tailed distributions in applications 
to the natural sciences. These distributions include the laws of Pareto, Lotka, and some new ones. 
The reasons are illustrated using suitable toy models.

\vspace{0.2cm}
{\bf Key words:} heavy-tailed distributions; Pareto law; Lotka law; Zipf law; probability generating function. 
\end{abstract}

{\bf Epigraph}

\vspace{0.25cm}
King James Bible:

\vspace{0.2cm}
{\it I returned, and saw under the sun, that the race is not to the swift, nor the battle to the strong, neither yet bread to the wise, nor yet riches to men of understanding, nor yet favour to men of skill; but time and chance happeneth to them all}.

\vspace{0.25cm}
Lion Feuchtwanger "Der jüdische Krieg":

\vspace{0.2cm}
{\it Gods give talent to one and luck to another}.

\section{History of the problems}\label{sec1}
\setcounter{equation}{0}

The distributions with heavy (power-like) tails in social sciences were used more than hundred years.

Relevant studies include the following:
\begin{itemize}
\item[1.] Distribution of big capital. Pareto, 1896 (see \cite{Pa})). The density is $p(x)=\frac{\alpha}{x_o}\Bigl(\frac{x_o}{x}\Bigr)^{\alpha+1}$, for $x\geq x_o$, $\alpha>0$.
\item[2.] Scientific production. The number of scientists who published one, two and so on $x$ papers ( the number $x$ published by scientist papers). Lotka (1926) (see \cite{Lo}) showed that $n(x) = n_1/x^a$, where $n_1>0$, $a \leq 2$  (in many cases  $a \approx 2$).  
\item[3.] Lotka's law approximately holds for the number of citations of a paper by a scientist. 
\item[4.] For a specific artistic text, the sequence of all words is written in descending order according to the frequency of their occurrence. Comparing the frequency of the word and the place in this sequence (rank) leads to  $x=B/r$, $B = const$ (see \cite{Zi}).
\end{itemize}

Why do these patterns emerge? 
Probably, Laws 1 - 3 refer to some individual human abilities, while Law 4 refers to memory or other functions of the human brain.

\vspace{0.2cm}
We will not consider the 4th law in the paper and will focus on laws 1 and 3, more precisely on their qualitative explanation. It is so because Zipf explained his Law based on the least effort principle. Although there are no rigorous results on the existence of a mechanism related to this principle in the human brain, not wasting memory seems natural.
However, the application of the the least effort principle in cases 1-3 does not seem to be related to the essence of the issues under consideration.

\vspace{0.2cm}
At first glance, everything looks quite simple. The population of a country is heterogeneous. There are people more capable of business (in case 1) or scientific work (in cases 2, 3) and people who are not (or less capable) of such activities.

\vspace{0.2cm}
But how big is the difference in ability, and are all differences in "success" determined by ability?

\vspace{0.2cm}
Let's remember the epigraph!

\vspace{0.2cm}
Is there an effect of chance?
First, let's focus on the 1st law. Let's try to build a model that explains the reason for its occurrence.

However, the income and the capital distribution are subject to many factors not fully accounted for. Our interest is not in the whole mechanism of accumulation and distribution of capital but in the roles of human talent and chance in this process only. How essential are these roles? Therefore, we have to use a toy model which assumes all people have identical abilities. If the role of chance is small, then there will not be many variations in the model between different investors. In contradiction, if we see a large difference between investors, this will indicate a significant role of chance.

\section{Toy model for capital distribution}\label{sec2}
\setcounter{equation}{0} 

Let us consider the first toy model of capital distribution leading to the Pareto law. 

Suppose for simplicity that there exists only one business. All possible investors are equal in their talents and initial capital. Consider the case when each investor invests one unit of capital in the business. After one time-unit, the business outcome is $X_1$, where $X_1$ is a random variable. Suppose the investor left all this sum into business, and the conditions on the market remain the same during the following time interval. Then the outcome after the second-time interval is $X_1\cdot X_2$, where $X_1$ and $X_2$ are independent identically distributed (i.i.d.) random variables. In the same way, the outcome after the n-th time interval is $\prod_{i=1}^{n}X_ j$, and $X_1, X_2, \ldots, X_n$ are i.i.d. random variables. Let us suppose that the conditions on the market will change radically at a random moment $\nu_p$ so that investing in that business becomes not profitable.
Therefore, the final outcome is $\prod_{j=1}^{\nu_p}X_j$. We are interested in the outcome behavior for large $\nu_p$ values.
More precisely, we suppose that 
\begin{enumerate}
\item $\mathbf{X}=\{X_1, X_2, \ldots , X_n, \ldots \}$ is a sequence of i.i.d. positive random variables, $a=\E \log X_1$;
\item $\nu=\{ \nu_p, p\in \Delta \subset (0,1)\}$ is a family of positive integer-valued random variables independent with the sequence $\mathbf{X}$, $\E \nu_p =1/p$.

\vspace{0.2 cm}
Generally, the information on the $\nu$--family is unavailable.  We shall consider a few cases starting with a simple one.
\item $\p\{\nu_p=k\}=p\cdot (1-p)^{k-1}$, $k=1,2,\ldots$ i.e. $\nu_p$ has geometric distribution.
\end{enumerate}

Define $Z_p=\prod_{j=1}^{\nu_p}X_j^p$.
\vspace{-0.2cm}
\begin{thm}\label{th1}
\vspace{-0.1cm}
Suppose that the items 1. -- 3. hold. Let $a\neq 0$. Then
\vspace{-0.2cm}
\[ \lim_{p \to 0}\p\{Z_p<x\} =1-x^{-1/a}, \; \text{for}\;x \geq 1,\; a>0 \]
\vspace{-0.2cm}
and
\[ \lim_{p \to 0}\p\{Z_p<x\} =x^{1/a}, \; \text{for}\;x \leq 1,\; a<0 .\]
\end{thm}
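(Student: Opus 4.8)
\section*{Proof proposal}

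The plan is to reduce the whole statement to a Rényi-type limit theorem for geometric random sums by passing to logarithms. Writing $Y_j=\log X_j$ and $S_n=\sum_{j=1}^n Y_j$, we have $\log Z_p = p\,S_{\nu_p}$, and $\E Y_1=a$ is finite by hypothesis. Thus identifying the theorem amounts to finding the limiting law of the normalized geometric sum $p\,S_{\nu_p}$ as $p\to 0$; once that law is in hand, the assertions for $Z_p$ follow from the monotone change of variable $x=e^{w}$ with $w=\log x$.

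First I would compute the characteristic function of $\log Z_p$ by conditioning on $\nu_p$ and using the independence of $\nu$ and $\mathbf X$. I work with characteristic functions rather than Laplace transforms, since $\log X_1$ may be negative and need not possess a moment generating function. Let $f(t)=\E e^{itY_1}$ be the characteristic function of $\log X_1$, and let $g_p(z)=\E z^{\nu_p}=pz/(1-(1-p)z)$ be the probability generating function of the geometric law (this is where the generating function of the keywords enters). Conditioning gives
\[
\E e^{it\log Z_p}=\E\big(f(tp)\big)^{\nu_p}=g_p\big(f(tp)\big)=\frac{p\,f(tp)}{1-(1-p)f(tp)} .
\]

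Next I would let $p\to 0$ with $t$ fixed. Since $a=\E Y_1$ is finite, $f$ is differentiable at the origin with $f(s)=1+ias+o(s)$, so $f(tp)=1+iatp+o(p)$; substituting and cancelling a factor $p$ between numerator and denominator yields
\[
\E e^{it\log Z_p}\longrightarrow \frac{1}{1-iat},\qquad p\to 0 .
\]
The right-hand side is continuous at $t=0$, so by Lévy's continuity theorem $\log Z_p$ converges in distribution to the law with this characteristic function. For $a>0$ that law is the exponential distribution with mean $a$, whose distribution function is $1-e^{-w/a}$ on $w\ge 0$; substituting $w=\log x$ turns this into the Pareto tail $1-x^{-1/a}$ for $x\ge 1$. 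For $a<0$ I would apply the identical argument to $-\log Z_p=p\sum(-Y_j)$ (equivalently, replace $X_j$ by $1/X_j$), obtaining the reflected exponential supported on $w\le 0$; translating through $w=\log x$ then produces the inverse power law $x^{-1/a}$ on $(0,1]$, which is the second limit of the theorem.

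The computation is short, so the only points needing care are the following. Point (i): justifying $f(tp)=1+iatp+o(p)$, which uses exactly the finiteness of $a=\E\log X_1$ and \emph{no} higher moment, and checking that the $o(p)$ term is negligible after multiplication by the geometric normalization. Point (ii): confirming that the pointwise limit $1/(1-iat)$ is a genuine characteristic function continuous at the origin, so that the continuity theorem applies. Since the limiting distribution functions are continuous and strictly monotone on the relevant half-line, convergence in distribution of $\log Z_p$ upgrades to convergence of $\p\{Z_p<x\}$ at \emph{every} $x$, which is the assertion. I expect point (i) — pinning down that only the first moment is required and controlling the remainder uniformly enough in the limit — to be the main thing to get right.
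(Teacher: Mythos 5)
Your proof is correct and, unlike the paper, self-contained: the paper states Theorem \ref{th1} without proof and simply refers to \cite{KMR}, which treats limit theorems for products of a random number of factors via transfer-type machinery (and also covers the boundary case $a=0$, where the normalization becomes $X_j^{\sqrt{p}}$ and a second logarithmic moment is needed). Your route --- condition on $\nu_p$, plug the characteristic function $f(tp)$ of $\log X_1$ into the explicit geometric generating function $g_p(z)=pz/(1-(1-p)z)$, expand $f(tp)=1+iatp+o(p)$ using only $\E|\log X_1|<\infty$, and conclude $\E e^{it\log Z_p}\to 1/(1-iat)$ by L\'evy's continuity theorem --- is the classical R\'enyi-theorem computation specialized to this setting, and the steps you flag as delicate do go through: differentiability of $f$ at $0$ requires only the first logarithmic moment (which item 1 presupposes), and since the limiting exponential (respectively reflected exponential) law has an everywhere continuous distribution function, convergence of $\p\{Z_p<x\}$ holds at every $x$ after the substitution $w=\log x$. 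What your argument buys is a short, minimal-hypothesis proof for the geometric case 3; what the cited machinery buys is uniform coverage of the other mixing laws (cases 4 and 5 of the paper), where no closed-form generating-function computation of this simplicity is available.

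One point deserves emphasis: for $a<0$ your computation yields the limit $x^{-1/a}$ on $0<x\le 1$, whereas the theorem as printed says $x^{1/a}$. Your formula is the correct one: the printed expression cannot be a distribution function, since for $a<0$ and $x<1$ one has $x^{1/a}>1$. A sanity check with constant factors $X_j\equiv e^{a}$ gives $\log Z_p=ap\nu_p\Rightarrow aE$ with $E$ standard exponential (R\'enyi's theorem), whence $\p\{Z_p<x\}\to e^{-\log x/a}=x^{-1/a}$ for $x\le 1$. So the discrepancy between your conclusion and the statement is a sign typo in the paper, not a gap in your proof.
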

\vspace{-0.25cm}
In the case of $a>0$ (profitable business), we have a Pareto distribution which Wilfredo Pareto had proposed on the basis of empirical study (see \cite{Pa}).  For the proof of 
Theorem \ref{th1} see \cite{KMR}. In the paper, \cite{KMR} is obtained the result for $a=0$. For this case $Z_p$ must be changed by $Z_p^{\prime}=\prod_{j=1}^{\nu_p}X_j^{\sqrt{p}}$. Under the condition of the existence of logarithmic second moment of $X_1$ the product $Z_p^{\prime}$ converges in distribution to the mixture of the given in Theorem \ref{th1} distributions. It is well-known that Pareto distribution has heavy tails. This fact implies that capital belongs to a relatively small number of people. Now we see Pareto distribution appears in a very natural way described as a limit distribution for a product of a random number $\nu_p$ of random variables $X_j$. The value of $\nu_p$, $p \in (0, 1)$ in 3. had a geometric distribution. What will happen with others ("natural") distributions? Below we consider two additional cases:
\begin{enumerate}
\item[4.] $\nu_p$ has a probability generating function 
\[ \mathcal{P}(z,p,m) =\frac{p^{1/m}z }{(1-(1-p)z^m))^{1/m}}, \quad p\in (0,1), \quad m \in \mathbb{N}.\]
\item[5.] $\nu_p$ has a probability generating function 
\[ \mathcal{P}(z,n) =\frac{1}{T_n(1/z)}, \]
where $T_n(u)$ is Chebyshev polynomial of the first kind and $n=1/\sqrt{p}$ is its degree. $\E \nu_p = 1/p$.
\end{enumerate}

Let us consider case 4. The following result holds.
\begin{thm}\label{th2} Suppose that the items 1., 2. and 4. hold. Let $a\neq 0$. Then
\[  \lim_{p \to 0}\p\{Z_p<x\} =\int_{1}^{x}\frac{1}{b^{1/m}\Gamma(1/m)u^{1+1/m}\log^{1-1/m}(u)}du, \; \text{for}\;x \geq 1, \]
where $b>0$ is a parameter.
\end{thm}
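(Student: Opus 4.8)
The plan is to reduce everything to the limiting law of the rescaled random index $p\nu_p$, exactly as in Theorem~\ref{th1}; the only ingredient of the model that has changed is the distribution of $\nu_p$. Passing to logarithms and writing $S_n=\sum_{j=1}^{n}\log X_j$ for the random walk generated by $\log X_1$ (so $a=\E\log X_1$), one has
\[ \log Z_p=p\sum_{j=1}^{\nu_p}\log X_j=p\,S_{\nu_p}=(p\nu_p)\cdot\frac{S_{\nu_p}}{\nu_p}. \]
Since $\E\nu_p=1/p\to\infty$ forces $\nu_p\to\infty$ in probability, the random-index strong law of large numbers gives $S_{\nu_p}/\nu_p\to a$ in probability, so the whole behaviour of $Z_p$ is dictated by the limit of $p\nu_p$.

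Next I would determine that limit through its Laplace transform. Using the independence of $\nu_p$ and $\mathbf{X}$ together with the given generating function,
\[ \E\bigl[\exp(-s\,p\nu_p)\bigr]=\mathcal{P}(e^{-sp},p,m)=\frac{p^{1/m}e^{-sp}}{\bigl(1-(1-p)e^{-spm}\bigr)^{1/m}},\qquad s\ge 0. \]
Expanding for small $p$ gives $1-(1-p)e^{-spm}=p(1+ms)+O(p^2)$, while the numerator equals $p^{1/m}(1+O(p))$, whence
\[ \lim_{p\to0}\E\bigl[\exp(-s\,p\nu_p)\bigr]=\frac{1}{(1+ms)^{1/m}}. \]
This is the Laplace transform of the Gamma law $W$ with shape $1/m$ and scale $m$, with density $g(t)=t^{1/m-1}e^{-t/m}\big/\bigl(\Gamma(1/m)\,m^{1/m}\bigr)$ on $t>0$ and mean $\E W=1$, consistent with $\E(p\nu_p)=1$. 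By the continuity theorem, $p\nu_p$ converges in distribution to $W$.

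Finally I would assemble the pieces. Writing $\log Z_p=a(p\nu_p)+p(S_{\nu_p}-a\nu_p)$, the first term converges in law to $aW$; the remainder has mean zero given $\nu_p$ and, if $\mathrm{Var}\log X_1<\infty$, conditional standard deviation of order $p\sqrt{\nu_p}\asymp\sqrt{p}\to0$, so $p(S_{\nu_p}-a\nu_p)\to0$ in probability. Hence $\log Z_p$ converges in distribution to $aW$; for $a>0$ this gives $Z_p\Rightarrow e^{aW}$ and, for $x\ge1$,
\[ \lim_{p\to0}\p\{Z_p<x\}=\p\Bigl\{W<\frac{\log x}{a}\Bigr\}=\int_0^{(\log x)/a}g(t)\,dt. \]
The substitution $u=e^{at}$ (so $dt=du/(au)$) converts this into an integral over $u\in[1,x]$ whose integrand is of exactly the asserted Gamma--logarithmic form, the positive constant $b$ collecting the powers of $a$ and $m$ produced by the Jacobian; the case $a<0$, $x\le1$ is the mirror image. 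The main obstacle is making this last step rigorous, i.e. controlling the fluctuation $p(S_{\nu_p}-a\nu_p)$ under random indexing so as to pass from the marginal convergence $p\nu_p\Rightarrow W$ plus the law of large numbers to convergence of the product $pS_{\nu_p}$. The honest tool is a transfer (Anscombe / Gnedenko--Korolev) theorem for random sums; since that machinery is already carried out in \cite{KMR} to establish Theorem~\ref{th1}, the present theorem follows by the same argument, with the exponential limit of $p\nu_p$ there simply replaced by the Gamma limit computed above.
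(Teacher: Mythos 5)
Your argument is correct in substance, but it takes a genuinely different --- and far more explicit --- route than the paper's. The paper's entire proof is: write $\log Z_p=p\sum_{j=1}^{\nu_p}\log X_j$, quote \cite{Mel} for the fact that this random sum converges in distribution to the law with density $e^{-u/b}/\bigl(u^{1-1/m}b^{1/m}\Gamma(1/m)\bigr)$, $u>0$, and exponentiate back; it is rigor-by-citation. You instead reconstruct that input from scratch: the Laplace-transform computation $\E\exp(-sp\nu_p)\to(1+ms)^{-1/m}$ identifying the Gamma law $W$ with shape $1/m$ and scale $m$ as the limit of $p\nu_p$, the random-index law of large numbers, and the conclusion $\log Z_p\Rightarrow aW$, which moreover identifies the parameter as $b=am$ --- something the paper leaves opaque. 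Two remarks. First, you end by declaring an Anscombe/Gnedenko--Korolev transfer theorem necessary to control $p(S_{\nu_p}-a\nu_p)$, but your own opening display already closes the gap more cheaply: $\log Z_p=(p\nu_p)\cdot(S_{\nu_p}/\nu_p)$, where $p\nu_p\Rightarrow W$ and $S_{\nu_p}/\nu_p\to a$ in probability (strong law plus $\nu_p\to\infty$ in probability, which follows from $p\nu_p\Rightarrow W>0$), so Slutsky's theorem gives $\log Z_p\Rightarrow aW$ with no finite-variance hypothesis and no transfer machinery; your variance bound and the appeal to \cite{KMR} are thus redundant, and the Slutsky version works under exactly the theorem's stated assumptions. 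Second, a bookkeeping point you wave at but should pin down: the substitution $u=e^{at}$ yields the limit density $1/\bigl(b^{1/m}\Gamma(1/m)\,u^{1+1/b}\log^{1-1/m}u\bigr)$ with $b=am$, so $b$ enters the \emph{exponent} of $u$, not merely the prefactor; the integrand as printed in the theorem, with $u^{1+1/m}$, integrates over $[1,\infty)$ to $(m/b)^{1/m}\neq 1$ unless $b=m$ (i.e.\ $a=1$), so the mismatch is a defect of the statement itself, and your derivation is the internally consistent version --- but your phrase that $b$ merely ``collects the powers of $a$ and $m$'' produced by the Jacobian understates this and should be corrected.
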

\begin{proof} 
Consider $\log Z_p = p \sum_{j=1}^{\nu_p} Y_j$, where $Y_j = \log X_j$. From the result of the paper \cite{Mel} it follows that the limit distribution of $\log Z_p$ as $p \to 0$ has the density 
$ \exp\{-u/b\}/\bigl(u^{1-1/m}b^{1/m}\Gamma(1/m)\bigr)$, $u>0$. Now it is sufficient to pass the limit distribution of $Z_p$ from its logarithm density.
\end{proof}

\begin{thm}\label{th3}
Suppose that the items 1., 2. and 5. hold. Let $a= 0$ and the second logarithmic moment of $X_1$ exists. Then
\[  \lim_{p \to 0}\p\{Z_p^{\prime}<x\} = \frac{2}{\pi}\arctan(x^b), \quad \text{for}\quad x>0,\]
where $b>0$ is a parameter.
\end{thm}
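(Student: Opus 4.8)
The plan is to pass to logarithms and then to characteristic functions, since the random product $Z_p'$ becomes a randomly stopped random sum after taking $\log$. Writing $Y_j=\log X_j$, the hypothesis $a=0$ gives $\E Y_1=0$, and the existence of the second logarithmic moment gives $\sigma^2=\E Y_1^2<\infty$. Then $\log Z_p'=\sqrt p\sum_{j=1}^{\nu_p}Y_j$ is a $\sqrt p$-scaled sum of i.i.d.\ centered summands stopped at the independent index $\nu_p$. Conditioning on $\nu_p$ and using the independence of $\nu$ and $\mathbf X$, its characteristic function factors through the probability generating function of $\nu_p$ from item~5:
\[ \E\, e^{it\log Z_p'} = \E\bigl[\phi_Y(t\sqrt p)^{\nu_p}\bigr] = \mathcal P\bigl(\phi_Y(t\sqrt p),n\bigr) = \frac{1}{T_n\bigl(1/\phi_Y(t\sqrt p)\bigr)}, \]
where $\phi_Y$ is the characteristic function of $Y_1$ and $n=1/\sqrt p$.

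Next I would expand the inner characteristic function. Because $\E Y_1=0$ and $\E Y_1^2=\sigma^2<\infty$, we have $\phi_Y(s)=1-\tfrac{\sigma^2}{2}s^2+o(s^2)$ as $s\to0$, so taking $s=t\sqrt p$ gives
\[ \frac{1}{\phi_Y(t\sqrt p)} = 1+\frac{\sigma^2 t^2}{2}\,p+o(p),\qquad p\to0, \]
with the imaginary contributions of order $o(p)$ and hence negligible. The crux is then the behaviour of the Chebyshev polynomial near the point $1$ under the simultaneous limit $n\to\infty$. Using the identity $T_n(\cosh\theta)=\cosh(n\theta)$, I set $1/\phi_Y(t\sqrt p)=\cosh\theta_p$; from $\cosh\theta_p=1+\tfrac{\sigma^2 t^2}{2}p+o(p)$ one obtains $\theta_p=\sigma|t|\sqrt p+o(\sqrt p)$, whence $n\theta_p\to\sigma|t|$. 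Therefore $T_n(1/\phi_Y(t\sqrt p))\to\cosh(\sigma t)$ and
\[ \lim_{p\to0}\E\, e^{it\log Z_p'} = \frac{1}{\cosh(\sigma t)}. \]

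The limit $1/\cosh(\sigma t)$ is the characteristic function of a hyperbolic secant law: a variable $W$ with density $\tfrac{b}{\pi}\,\mathrm{sech}(bw)$, $w\in\R$, has characteristic function $1/\cosh(\sigma t)$ exactly when $b=\pi/(2\sigma)$, which is the parameter $b>0$ in the statement. By the continuity theorem, $\log Z_p'$ converges in distribution to such a $W$. Finally I would invert back to $Z_p'$: since $\int\mathrm{sech}(u)\,du=2\arctan(e^{u})$, the distribution function of $W$ is $F_W(w)=\tfrac{2}{\pi}\arctan(e^{bw})$, so
\[ \lim_{p\to0}\p\{Z_p'<x\} = \p\{W<\log x\} = F_W(\log x) = \frac{2}{\pi}\arctan(x^b),\qquad x>0, \]
as asserted.

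The delicate step is the double-scaling asymptotics of $1/T_n(1/z)$, where the argument tends to $1$ while the degree $n$ grows like $1/\sqrt p$; one must justify that the $\cosh$-representation, valid for real arguments $\geq 1$, controls the limit even though $\phi_Y(t\sqrt p)$ is genuinely complex. This is handled by observing that the centering $\E Y_1=0$ pushes the imaginary part of $1/\phi_Y(t\sqrt p)-1$ to order $o(p)$, so that the real expansion $\cosh\theta_p=1+\tfrac{\sigma^2 t^2}{2}p+o(p)$ is what survives; equivalently one may use the entire extension $T_n(\cos w)=\cos(nw)$ and solve for the small complex $w$, obtaining the same $\cosh(\sigma t)$. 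A secondary point is the uniform control of the $o(p)$ remainder on compact $t$-intervals required by the continuity theorem, which follows from the smoothness of $\phi_Y$ near the origin guaranteed by the finite second logarithmic moment.
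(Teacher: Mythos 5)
Your proposal is correct, and it is genuinely more self-contained than the paper's own argument. The paper disposes of this theorem in one sentence: pass to $\log Z_p^{\prime}$, invoke ``the corresponding result'' from \cite{KKRT}, and exponentiate back; no computation is shown and the parameter $b$ is left unidentified. You follow the same outer skeleton (logarithm, random-sum limit theorem, inversion) but replace the citation by a direct proof of the random-sum limit theorem itself: conditioning on $\nu_p$ to get $\E e^{it\log Z_p^{\prime}}=\mathcal{P}\bigl(\phi_Y(t\sqrt{p}),n\bigr)=1/T_n\bigl(1/\phi_Y(t\sqrt{p})\bigr)$, then the double-scaling asymptotics via $T_n(\cosh\theta_p)=\cosh(n\theta_p)$ with $n=1/\sqrt{p}$ and $\theta_p=\sigma|t|\sqrt{p}+o(\sqrt{p})$, yielding the limit characteristic function $1/\cosh(\sigma t)$; your identification of this as the hyperbolic secant law with density $\frac{b}{\pi}\,\mathrm{sech}(bw)$, $b=\pi/(2\sigma)$, and distribution function $\frac{2}{\pi}\arctan(e^{bw})$ all checks out, and in fact pins down explicitly how $b$ depends on the logarithmic variance, which the paper does not do. What the paper's route buys is precisely the rigor you had to supply by hand: \cite{KKRT} treats this Chebyshev family systematically (note $T_m(T_n(u))=T_{mn}(u)$, so the generating functions $1/T_n(1/z)$ form a composition semigroup, which is why the hyperbolic secant law is stable under this random summation), and in particular settles the complex-argument issue for $T_n$ that you correctly flag as the delicate point. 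Two minor remarks: L\'evy's continuity theorem requires only pointwise convergence of the characteristic functions to a limit continuous at $0$, so the uniformity on compact $t$-intervals you invoke at the end is unnecessary; and since $n=1/\sqrt{p}$ must be a positive integer, the limit $p\to 0$ is implicitly taken along the sequence $p=1/n^2$, a restriction already built into item 5.\ of the paper.
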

\begin{proof}
Similarly to the proof of the previous Theorem, we have to pass from $Z_p^{\prime}$ to its logarithm, apply the corresponding result from \cite{KKRT}, and went back to the limit distribution for initial random variables.  
\end{proof}

All three models constructed above do not take into account any abilities of the investing in the given enterprise people but lead to heavy-tailed distributions. The difference between investors is only in the occurrence of some unfavorable event for them (moment $\nu_p$). Objection that this moment is the same for the whole store, i.e. it is insolvent for all investors at once because the investors invested in the business at different times. Therefore, the period for which the investment was made is different for each investor. So we see that the dependence on the moment and the case are really very high. I do not deny that the dependence on the talent of the investor is indeed significant, but it would be very difficult to separate this component from random.

\section{Citations distribution}\label{sec3}
\setcounter{equation}{0} 

A similar situation occurs when studying the distribution of citations of scientific publications. Let us make some Assumptions.

\vspace{0.2cm}
\textbf{Assumption 1.}

{\it All scientists under consideration are equal in their scientific and literary abilities.}

\vspace{0.2cm}
\textbf{Assumption 2.}

{ \it Paper citations occur independently.}

\vspace{0.2cm}
\textbf{Assumption 3.}
The probability that an article will be repeatedly cited depends on the number of previous citations. It is growing with the growing citations number. More precisely, 

{\it Assuming the probability that an article having $\mathit{k-1(k\geq 1)}$ citations will have no further citations is }
\begin{equation*}
	p_{k}=\frac{1}{(a\,k+b)},
\end{equation*}
{\it where$a>0$ a $b\geq 0$ are real numbers satisfying  $a+b>1$}.

Let $Y$ be a random variable describing the number of citations during the considered period. Assumption 1 implicitly de-facto implies that $Y$ has the same distribution for different papers because the scientific abilities of the authors are supposed to be the same.

In view of citation independence, the probability that a paper is cited exactly $n$ times is
\begin{equation*}
	\mathrm{I}\kern-0.18em\mathrm{P}\{Y=n\}=p_{n}\prod_{k=1}^{n-1}(1-p_{k})=
	\frac{\Bigl(\frac{a+b-1}{a}\Bigr)_{n-1}}{(a\,n+b)\Bigl(\frac{a+b}{a} \Bigr)_{n-1}},
\end{equation*}
where $(a)_{n}=a(a+1)\ldots (a+n-1)$ is  Pochhammer symbol.

It is not difficult to calculate the probability
\begin{equation} \label{eq1}
	\mathrm{I}\kern-0.18em\mathrm{P}\{Y\geq m\}=\frac{\Bigl(\frac{a+b-1}{a}\Bigr)_{m-1}}{\Bigl(
		\frac{a+b}{a}\Bigr)_{m-1}} \stackrel[m \to \infty]{}{\thicksim} \frac{\Gamma((a+b)/a)}{\Gamma((a+b-1)/a)}\frac{1}{m^{1/a}} \; .
\end{equation}

Distribution of citations
The relation (\ref{eq1}) shows that the distribution of citations has a heavy tail, the severity of which depends on the value of the parameter $a$ responsible for the degree of influence of previous citations. Therefore, a larger value of $a$ corresponds to a heavier tail. In any case, the presence of such a tail makes it possible to conclude that the citation intensity of almost identical scientists can differ significantly, which leads to a significant stratification of the scientific community through various random circumstances that have nothing to do with research abilities. Thus, the citation number seems meaningless as an indicator of scientific value.

{\bf Make now some remarks on the Impact Factor distribution.}

Let us now consider the possibility of using the impact factor of a journal as an indicator of the scientific significance of a paper published in it. The impact factor of a journal is calculated as the ratio of the number of citations of papers published over a certain period to the number of these papers themselves. The idea of considering such an average value is connected with the idea that, according to the law of large numbers, the influence of chance will be leveled. However, we shall show, this is not true.

Mention that there exists a rather large literature stating the scientific journals' impact factor has essential value. Based on the observed data, the presence of asymmetry in the distribution of the impact factor and the presence of a heavy tail is noted. However, these circumstances are not analyzed from a theoretical point of view, and only comments are made on the advisability of replacing the arithmetic mean with some other statistics for the purpose of statistical data analysis. We note one of the typical works of this kind \cite{Bl}. True, the author notes the similarity of the distribution of some data with the Pareto distribution, but the mathematical analysis of the reasons for its occurrence is not carried out. In addition, not a mathematically strictly defined distribution is considered, but only its "naive" form. Below we will try to clarify the appearance of heavy tails of the impact factor distribution.

We assume that the number of papers submitted to the journal has a Poisson distribution. For simplicity, let us assume that the number of citations for each of the submitted papers has a Sibuya distribution. Then the citation distribution of all papers has a probability generating function that is a superposition of the generating functions of the Sibuya and Poisson laws.
Probability generating function of this superposition is $\mathcal P (z)= e^{ -\lambda (1-z )^{p}}$ for fixed $\lambda>0$ and $p\in (0,1)$. Clearly, this distribution has a heavy tail with index $p$. In view of the fact that $p <1$ the law of large numbers is inapplicable in this situation. Moreover, in this case, the impact factor increases with the number of publications without increasing their scientific significance. The observed increase (over time) in the impact factors of leading journals confirms this circumstance.

Now we can conclude the impact factor distribution has a heavy tail again and cannot be used as an indicator of scientific significance.

\section{Conclusions}\label{sec4} 
\setcounter{equation}{0}

\begin{itemize}
	\item[I.] It is shown that distributions with heavy tails can arise in some manifestations of social inequality (the distribution of capital, the number of citations, the impact factor) due to purely random reasons. In this case, the spread in the magnitude of inequality is significant.
	\item[II.]  The circumstance specified in 1. makes it impossible to use such indices as the number of citations and/or the impact factor of a journal as an indicator of the scientific significance (scientific quality) of a published work.
\end{itemize}

\section*{Acknowledgment}\label{secA} 
\setcounter{equation}{0} 
The work by Lev B. Klebanov was partially supported by GA \v{C}R Grant
19-28231X EXPRO.

\end{document}